\def\ps@pprintTitle{%
	\let\@oddhead\@empty
	\let\@evenhead\@empty
	\def\@oddfoot{\reset@font\hfil\thepage\hfil}
	\let\@evenfoot\@oddfoot
}
\setlist[itemize]{noitemsep, topsep=0pt}
\setlist[enumerate]{noitemsep, topsep=0pt}
\DeclareMathAlphabet{\mathpzc}{OT1}{pzc}{m}{it}
\newtheorem{proposition}{Proposition}
\theoremstyle{definition}
\theoremstyle{remark}
\newtheorem{remark}{Remark}
\begin{document}

\begin{frontmatter}
	
	\title{On the Security of an Unconditionally Secure, Universally Composable Inner Product Protocol}

	\author{Babak~Siabi}
	\ead{b.siabi@ec.iut.ac.ir}
	
	\author{Mehdi~Berenjkoub}
	\ead{brnjkb@cc.iut.ac.ir}

	\begin{abstract}
		In this paper we discuss the security of a distributed inner product (DIP) protocol [\textit{IEEE TIFS}, 11(1), (2016), 59-73]. We show information leakage in this protocol that does not happen in an ideal execution of DIP functionality. In some scenarios, this information leakage enables one of the parties to completely learn the other party’s input. We will give examples of such scenarios.
	\end{abstract}
	
	\begin{keyword}
		inner product\sep preprocessing model\sep secure computation\sep security analysis
	\end{keyword}
	
\end{frontmatter}



\section{Preliminaries}\label{sec:preliminaries}

\subsection{Notations}
In accordance with \cite{david2016unconditionally}, in the following, we denote by $\mathbb{F}_q$ the finite field of order $ q $, by ${\mathbb{F}_q}^n$ the space of all $n$-vector with elements in $\mathbb{F}_q$ and by ${\mathbb{F}_q}^{m\times n}$ the space of all $m\times n$ matrices with elements belonging to $\mathbb{F}_q$. We use over-bar lowercase letters to represent vectors in ${\mathbb{F}_q}^n$ (e.g. $\bar{a}$ represents a vector), and bold uppercase letters to represent matrices in ${\mathbb{F}_q}^{m \times n}$ (e.g. $\textbf{A}$ represents a matrix). We denote by $x \in_R D$ the process of uniformly random sampling of element $x$ from domain $D$.

\subsection{Distributed inner product functionality}
The functionality considered in \cite{david2016unconditionally}, is the distributed version of two-party inner product (we refer to this functionality as DIP). In contrast to the conventional inner product, in DIP the result is shared between parties. More precisely, it is assumed that $P_1$ and $P_2$ hold private vectors $\bar{x}_1\in {\mathbb{F}_q}^k$ and $\bar{x}_2\in {\mathbb{F}_q}^k$, respectively, and intend to calculate $w=\langle \bar{x}_1 \cdot \bar{x}_2 \rangle$ such that for $i=1,2$ party $P_i$ receives an additive random share $w_i\in \mathbb{F}_q$ satisfying $w_1+w_2=w$.

In an ideal world, this functionality is handled by a TTP as illustrated in Fig. \ref{fig:ideal functionality}. Since the TTP honestly follows the illustrated procedure, the output shares are uniformly distributed on $\mathbb{F}_q$ and parties learn nothing about $w=\langle \bar{x}_1 \cdot \bar{x}_2 \rangle$ or the input of the other party.

\begin{figure}[!t]
	\fbox{
		\begin{minipage}{0.95\columnwidth}
			\begin{center}
				\textbf{Functionality $\mathcal{F}_{DIP}$}
				\rule{\linewidth}{1pt}
			\end{center}

			\textbf{Inputs:} $P_1$ and $P_2$ hold $\bar{x}_1\in {\mathbb{F}_q}^k$ and $\bar{x}_2\in {\mathbb{F}_q}^k$ respectively.
			
			\textbf{Output:} $P_1$ learns $w_1\in_R \mathbb{F}_q$ and $P_2$ learns $w_2=\langle \bar{x}_1 \cdot \bar{x}_2 \rangle - w_1$.
			
			\rule{\linewidth}{1pt}
			
			$P_1:$ Sends $\bar{x}_1$ to TTP.
			
			$P_1$: Sends $\bar{x}_2$ to TTP.
			
			TTP: Upon receiving $\bar{x}_1$ and $\bar{x}_2$,
			\begin{itemize}
				\item if $\bar{x}_1 \notin {\mathbb{F}_q}^k$ or $\bar{x}_2 \notin {\mathbb{F}_q}^k$ sets $w_1=w_2=\lambda$,

				\item otherwise, chooses $u\in _R \mathbb{F}_q$ , sets $w_1=u$ and $w_2=\langle \bar{x}_1\cdot \bar{x}_2 \rangle -u$.
				
			\end{itemize}
			TTP: Sends $w_1$ to $P_1$ and $w_2$ to $P_2$.
		\end{minipage}
		}
	\caption{The ideal functionality of distributed inner product ($\mathcal{F}_{DIP}$)
						\label{fig:ideal functionality}}
\end{figure}

\begin{remark}\label{Rem:SSPandCSP}
	It is of vital importance to note the difference between DIP and conventional inner product. In the latter, since the parties learn the product, they always (even in the ideal world) can derive an equation for the other party’s input using their own input and output. But, derivation of such an equation is not possible in the case of DIP ideal functionality.
\end{remark}

\section{DIP Protocol of \cite{david2016unconditionally}}\label{sec:DIP protocol}
In \cite{david2016unconditionally}, David \emph{et al.} propose a protocol to realize $\mathcal{F}_{DIP}$. Fig. \ref{fig:DDG+ SSP protocol} illustrates this protocol (Protocol $\pi _{DDG+}$). Protocol $\pi _{DDG+}$ is designed in the preprocessing model and planned to be universally composable and unconditionally secure. In the preprocessing model, it is assumed that an initiator (denoted by $Init.$ in Fig. \ref{fig:DDG+ SSP protocol}) distributes a set of correlated randomness between parties before they decide (or fix) their inputs (preprocessing phase). After deciding the inputs, parties compute the functionality with the aid of preprocessed data (computation phase).

\begin{figure}[!t]
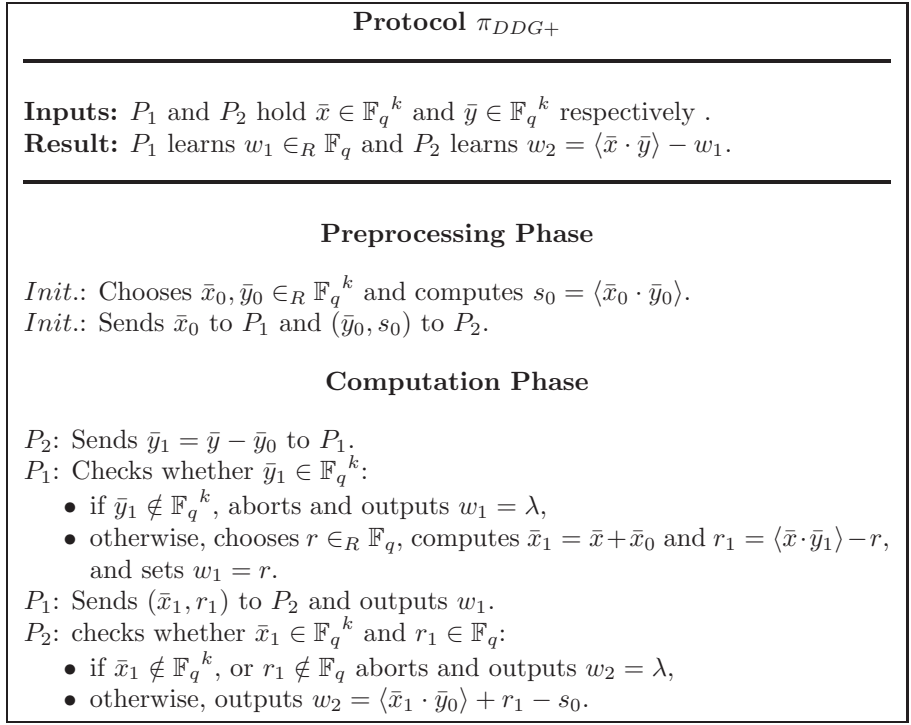

	\fbox{
		\begin{minipage}{0.95\columnwidth}
			\begin{center}
				\textbf{Protocol $\pi_{DDG+}$}
				\rule{\linewidth}{1pt}
			\end{center}

			\textbf{Inputs:} $P_1$ and $P_2$ hold $\bar{x}\in {\mathbb{F}_q}^k$ and $\bar{y}\in {\mathbb{F}_q}^k$ respectively .
			
			\textbf{Result:} $P_1$ learns $w_1\in_R \mathbb{F}_q$ and $P_2$ learns $w_2=\langle \bar{x} \cdot \bar{y} \rangle - w_1$.
			
			\rule{\linewidth}{1pt}

			\begin{center}
				\textbf{Preprocessing Phase}
			\end{center}
			
			$Init.$: Chooses $\bar{x}_0,\bar{y}_0 \in_R {\mathbb{F}_q}^k$ and computes $s_0= \langle \bar{x}_0 \cdot \bar{y}_0 \rangle$.
			
			$Init.$: Sends $\bar{x}_0$ to $P_1$ and ($\bar{y}_0, s_0$) to $P_2$.
						
			\begin{center}
				\textbf{Computation Phase}
			\end{center}
			
			$P_2$: Sends $\bar{y}_1= \bar{y}-\bar{y}_0$ to $P_1$.
		
			$P_1$: Checks whether $\bar{y}_1\in {\mathbb{F}_q}^k$:
			\begin{itemize}
				\item if $\bar{y}_1\notin {\mathbb{F}_q}^k$, aborts and outputs $w_1=\lambda$,
				\item otherwise, chooses $r\in_R \mathbb{F}_q$, computes $\bar{x}_1= \bar{x}+\bar{x}_0$ and $r_1= \langle \bar{x} \cdot \bar{y}_1 \rangle -r$, and sets $w_1=r$.
			\end{itemize}
			$P_1$: Sends $(\bar{x}_1, r_1)$ to $P_2$ and outputs $w_1$.
			
			$P_2$: checks whether $\bar{x}_1\in {\mathbb{F}_q}^k$ and $r_1\in \mathbb{F}_q$:
			\begin{itemize}
				\item if $\bar{x}_1\notin {\mathbb{F}_q}^k$, or $r_1\notin \mathbb{F}_q$ aborts and outputs $w_2=\lambda$,
				\item otherwise, outputs $w_2= \langle \bar{x}_1 \cdot \bar{y}_0 \rangle +r_1-s_0$.
			\end{itemize}
			
		\end{minipage}
		}
	\caption{The DIP protocol of \cite{david2016unconditionally}
						\label{fig:DDG+ SSP protocol}}
\end{figure}

\section{Security Analysis}\label{sec:security analysis}

A careful inspection of protocol $\pi_{DDG+}$ reveals that it does not simulate the ideal $\mathcal{F}_{DIP}$ completely. It is easy to check this for $k=1$, where the input vectors reduce to scalar values. In this case, at the end of a run of protocol $\pi_{DDG+}$, $P_2$ learns $P_1$’s input. In \cite{dowsley2010two}, authors ignore this case and reason that DIP functionality inherently is not private for $ k=1 $. As intuitively mentioned earlier in Remark \ref{Rem:SSPandCSP}, it is the case for conventional inner product and not for DIP. In the following proposition, we formally state an almost trivial security property of $\mathcal{F}_{DIP}$ which flatly contradicts the reasons given in \cite{dowsley2010two}.

\begin{proposition}\label{prop:SSP security for k=1}
	If $\pi$ is a protocol that securely computes $\mathcal{F}_{DIP}$ for input vectors of length $k$, then for every $1\leq k'\leq k$, there is a protocol $\pi_{k'}$ constructed with black-box use of $\pi$ which securely computes $\mathcal{F}_{DIP}$ for input vectors of length $k'$.
\end{proposition}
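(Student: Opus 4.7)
The plan is to realize $\pi_{k'}$ by \emph{zero-padding} followed by a single black-box invocation of $\pi$. Concretely, on input $\bar{x}_i \in {\mathbb{F}_q}^{k'}$, party $P_i$ locally forms the padded vector $\bar{x}_i' = (\bar{x}_i,0,\ldots,0) \in {\mathbb{F}_q}^{k}$ (appending $k-k'$ zeros); the parties then run $\pi$ once on $(\bar{x}_1',\bar{x}_2')$, and each outputs the share it receives from $\pi$. Correctness is immediate: the $k-k'$ padded coordinates contribute $0\cdot 0 = 0$ to the product, so $\langle \bar{x}_1' \cdot \bar{x}_2' \rangle = \langle \bar{x}_1 \cdot \bar{x}_2 \rangle$, and the shares returned by $\pi$ are valid additive shares of the $k'$-dimensional inner product.

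For security, I would reduce any adversary against $\pi_{k'}$ to one against $\pi$ and then wrap the guaranteed simulator. Fix a real-world adversary $\mathcal{A}'$ against $\pi_{k'}$; since $\pi_{k'}$ merely pre-computes the honest parties' padded inputs and then invokes $\pi$, $\mathcal{A}'$ can be regarded directly as an adversary $\mathcal{A}$ against $\pi$. By the assumed UC-security of $\pi$ on input length $k$, there is a simulator $\mathcal{S}$ interacting with $\mathcal{F}_{DIP}$ whose output is indistinguishable from $\mathcal{A}$'s view in any execution of $\pi$. From $\mathcal{S}$ I would build a simulator $\mathcal{S}'$ interacting with $\mathcal{F}_{DIP}$ on input length $k'$ as follows (assume without loss of generality that $P_2$ is corrupted; the other case is symmetric): $\mathcal{S}'$ runs $\mathcal{S}$ internally, intercepts the query $\bar{x}_2' \in {\mathbb{F}_q}^{k}$ that $\mathcal{S}$ submits to its emulated length-$k$ functionality, truncates $\bar{x}_2'$ to its first $k'$ components $\bar{x}_2$, submits $\bar{x}_2$ to the external length-$k'$ functionality, and forwards the received share back to $\mathcal{S}$.

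The crux is that this truncation is sound: for the zero-padded honest input $\bar{x}_1'=(\bar{x}_1,0,\ldots,0)$ and any adversarial input $\bar{x}_2'=(\bar{x}_2,\bar{z}) \in {\mathbb{F}_q}^{k}$, one has $\langle \bar{x}_1' \cdot \bar{x}_2' \rangle = \langle \bar{x}_1 \cdot \bar{x}_2 \rangle$. Consequently the share that $\mathcal{S}'$ hands back to $\mathcal{S}$ is distributed exactly as it would be if $\mathcal{S}$ had interacted with an actual length-$k$ functionality on the padded inputs, so composing with the indistinguishability provided by $\mathcal{S}$ yields that $\pi_{k'}$ securely realizes $\mathcal{F}_{DIP}$ on input length $k'$.

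The only mild subtlety --- not really an obstacle --- is the observation that the UC definition quantifies over all honest inputs chosen by the environment, so in particular over the specifically zero-padded ones produced by $\pi_{k'}$; without this, one might worry that $\pi$'s simulator is guaranteed only on some restricted input distribution. Once that is granted, the argument is essentially syntactic and works for any protocol $\pi$ realizing $\mathcal{F}_{DIP}$, independently of the specific structure of $\pi_{DDG+}$ analysed later in the paper.
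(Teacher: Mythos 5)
Your proposal is correct and uses exactly the construction in the paper: zero-pad both inputs to length $k$ and make a single black-box call to $\pi$. The paper dismisses the security reduction as ``easy to check,'' whereas you spell out the simulator wrapping and the soundness of truncating the corrupted party's padded input; this is a faithful elaboration of the same argument, not a different route.
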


\begin{proof}
	$P_1$ and $P_2$ perform protocol $\pi_{k'}$ on inputs $\bar{x}=(x_1,x_2,\ldots,x_{k'})$ and $\bar{y}=(y_1,y_2,\ldots,y_{k'})$ as follows:
	\begin{enumerate}
		\item They append a zero vector of length $k-k'$ to their inputs and form extended inputs as $\bar{x}_e=(x_1,\ldots,x_{k'},0,\ldots,0)$ and $\bar{y}_e=(y_1,\ldots,y_{k'},0,\ldots,0)$.
		\item Then, they run $\pi$ on the extended inputs $\bar{x}_e$, $\bar{y}_e$, and receive $w_1$ , $w_2$.
		\item They output $w_1$ , $w_2$.
	\end{enumerate}
	It is easy to check that correctness and security of protocol $\pi_{k'}$ reduces to correctness and security of protocol $\pi$.
\end{proof}

Now, it can be concluded that protocol $\pi_{DDG+}$ does not realize $\mathcal{F}_{DIP}$ for $k=1$ and therefore it is not secure. Protocol $\pi_{DDG+}$ is insecure for $k>1$ as well. Particularly, $P_2$ can always deduce an equation on $P_1$'s input vector, $\bar{x}$, after receiving $\bar{x}_1=\bar{x}+\bar{x}_0$. That is,

\begin{gather*}
	\bar{x}_1=\bar{x}+\bar{x}_0\\
	\overset{\cdot \bar{y}_0}{\Longrightarrow} \langle \bar{x}_1 \cdot \bar{y}_0 \rangle=\langle \bar{x} \cdot \bar{y}_0 \rangle+\langle \bar{x}_0 \cdot \bar{y}_0 \rangle\\
	\Longrightarrow \langle \bar{x} \cdot \bar{y}_0 \rangle = \langle \bar{x}_1 \cdot \bar{y}_0 \rangle - s_0
\end{gather*}

Note that $P_2$ is not able to deduce such an equation in the ideal run of DIP functionality. Therefore, Protocol $\pi_{DDG+}$ leaks some information about $P_1$'s input beyond what is available in the ideal world. Technically, the ideal world simulator fails to construct the view of a real world adversary that controls $P_2$. It is important to note that the simulation failure happens considering both semi-honest and malicious adversaries. Hence Protocol $\pi_{DDG+}$ is insecure in both adversarial models.

The information leakage explained above, may seem harmless for large vectors because the adversary learns virtually nothing about $P_1$'s input. But, regarding composition, this little information leakage will be a serious problem. We demonstrate it in an example scenario.

\subsection{Example scenario: multiplication of a vector by a matrix}
Suppose that $P_1$ has a $1\times k$ vector $\bar{x}\in {\mathbb{F}_q}^k$ and $P_2$ has a $k \times k$ matrix $\textbf{Y}\in {\mathbb{F}_q}^{k \times k}$ and they intend to compute $\bar{w}=\bar{x}\times \textbf{Y}$ in a shared manner. That is, they want to receive random shares $\bar{w}_1$ and $\bar{w}_2$, correspondingly, such that $\bar{w}=\bar{w}_1+\bar{w}_2$. This computation will be trivial provided that $P_1$ and $P_2$ can run an arbitrary number of instances of a universally composable protocol for computing $\mathcal{F}_{DIP}$.

It is easy to check that protocol $\pi_{DDG+}$ is completely insecure to be the underlying DIP protocol for this scenario. Assume that $P_1$ and $P_2$ run protocol $\pi_{DDG+}$ to compute each element of $\bar{w}$. We denote the $i$th element of $\bar{w}$ by $w_i$ and the $i$th column of $\textbf{Y}$ by $Y_i$. To compute $w_i$, $P_1$ inputs $\bar{x}$ and $P_2$ inputs $Y_i$ in protocol $\pi_{DDG+}$. The output of each party is a share of $w_i$. In addition to its output share, $P_2$ derives an equation on $\bar{x}$ from each run of protocol $\pi_{DDG+}$ as discussed above. After $k$ runs, the parties receive their desired outputs. But, in this state $P_2$ has $k$ equations on $\bar{x}$ that will be enough to extract $\bar{x}$ if they are linearly independent. Specifically, if we denote by $\bar{x}_0^i$, $\bar{y}_0^i$ and $s_0^i$ the randomness received by the corresponding parties in the preprocessing phase of the $i$th run of protocol $\pi_{DDG+}$, then the equation that $P_2$ can deduce from the $i$th run will be of the form
\begin{equation*}
	\langle \bar{x} \cdot \bar{y}_0^i \rangle = \langle \bar{x}_1^i \cdot \bar{y}_0^i \rangle + s_0^i,
\end{equation*}.

\noindent where $\bar{x}_1^i=\bar{x}+\bar{x}_0^i$ is the message that $P_2$ receives from $P_1$ in the $i$th run. Let $\textbf{Y}_0$ be the matrix with $\bar{y}_0^i$ as its $i$th column and $\bar{q}_0$ be the vector with $\langle \bar{x}_1^i \cdot \bar{y}_0^i \rangle + s_0^i$ as its $i$th element. Now, we can write the set of these equations in the form of $\bar{x} \times \textbf{Y}_0=\bar{q}_0$ which is a system of linear equations. Therefore, vector $\bar{x}$ will be found uniquely if $\textbf{Y}_0$ is a non-singular matrix.

It is easy to extend above argument to matrix multiplication scenario and therefore all linear algebra protocols of \cite{david2016unconditionally}.

\section{Conclusion}\label{sec:conclusion}
Inspection of the proof given in \cite{david2016unconditionally} and \cite{dowsley2010two} for security of protocol $\pi_{DDG+}$ reveals that, in the simulation process, the case of corrupted $P_2$ is not discussed in detail due to the apparent similarity to the case of corrupted $P_1$. Considering the presented security flaw in this protocol, we can conclude a general recommendation to avoid any frugality in the process of security proofs, especially when tasks of parties are not exactly identical in the protocol.

In \cite{david2016unconditionally} and \cite{dowsley2010two}, various higher level privacy preserving linear algebra protocols are proposed that the DIP protocol is their fundamental building block. Fortunately, these higher level protocols use the DIP protocol as a black-box and, thus, they can be implemented using any universally composable secure DIP protocol.

\bibliography{Refs}

\end{document}